\newtheorem{proposition}{Proposition}
\newtheorem{theorem}{Theorem}
\newtheorem{lemma}{Lemma}
\newtheorem{assumption}{Assumption}
\newtheorem{definition}{Definition}
\begin{document}
\bibliographystyle{IEEE2}
\title{A Socially-Aware Incentive Mechanism for \\Mobile Crowdsensing Service Market}
\author{Jiangtian Nie$^{1,2}$, Zehui Xiong$^2$, Dusit Niyato$^2$, Ping Wang$^3$ and Jun Luo$^2$\\
$^1$Energy Research Institute @NTU, Interdisciplinary Graduate School, \\Nanyang Technological University, Singapore\\
$^2$School of Computer Science and Engineering, Nanyang Technological University, Singapore\\
$^3$Department of Electrical Engineering and Computer Science, York University, Canada }
\maketitle
\begin{abstract}
Mobile Crowdsensing has shown a great potential to address large-scale problems by allocating sensing tasks to pervasive Mobile Users (MUs). The MUs will participate in a Crowdsensing platform if they can receive satisfactory reward. In this paper, in order to effectively and efficiently recruit sufficient MUs, i.e., participants, we investigate an optimal reward mechanism of the monopoly Crowdsensing Service Provider (CSP). We model the rewarding and participating as a two-stage game, and analyze the MUs' participation level and the CSP's optimal reward mechanism using backward induction. At the same time, the reward is designed taking the underlying social network effects amid the mobile social network into account, for motivating the participants. Namely, one MU will obtain additional benefits from information contributed or shared by local neighbours in social networks. We derive the analytical expressions for the discriminatory reward as well as uniform reward with complete information, and approximations of reward incentive with incomplete information. Performance evaluation reveals that the network effects tremendously stimulate higher mobile participation level and greater revenue of the CSP. In addition, the discriminatory reward enables the CSP to extract greater surplus from this Crowdsensing service market.
\end{abstract}
\begin{IEEEkeywords}
Crowdsensing, social network effects, incentive mechanism, complete and incomplete information
\end{IEEEkeywords}

\section{Introduction}\label{Sec:Introduction}
We are witnessing a fast proliferation of mobile users and devices in daily life. The ubiquitous mobile devices with various embedded functional sensors have remarkably promoted the information generation process. These advances stimulate the rapid development of mobile sensing technologies, and mobile Crowdsensing becomes one of the most attractive and popular paradigms. Classical examples include Amazon Mechanic Turk~\cite{AMT}, Waze~\cite{Waze}, Sensorly~\cite{Sensorly}, GreenGPS~\cite{GreenGPS}, etc. The Crowdsensing systems heavily rely on total user participation level and the individual contribution from each user. However, individuals are reluctant to participate and share their collected information due to the lack of sufficient motivation and incentive. To stimulate and recruit users with mobile devices to participate in Crowdsensing, the Crowdsensing Service Provider (CSP) usually provides a reward for the users as monetary incentive. For this purpose, a large number of prior works have been dedicated to designing incentive mechanisms~\cite{duan2012incentive, koutsopoulos2013optimal, kawajiri2014steered, luo2015crowdsourcing,peng2015pay, han2016taming, han2016truthful,zhang2016incentives,han2018quality}.

\begin{figure}[t]
\centering
\includegraphics[width=.50\textwidth]{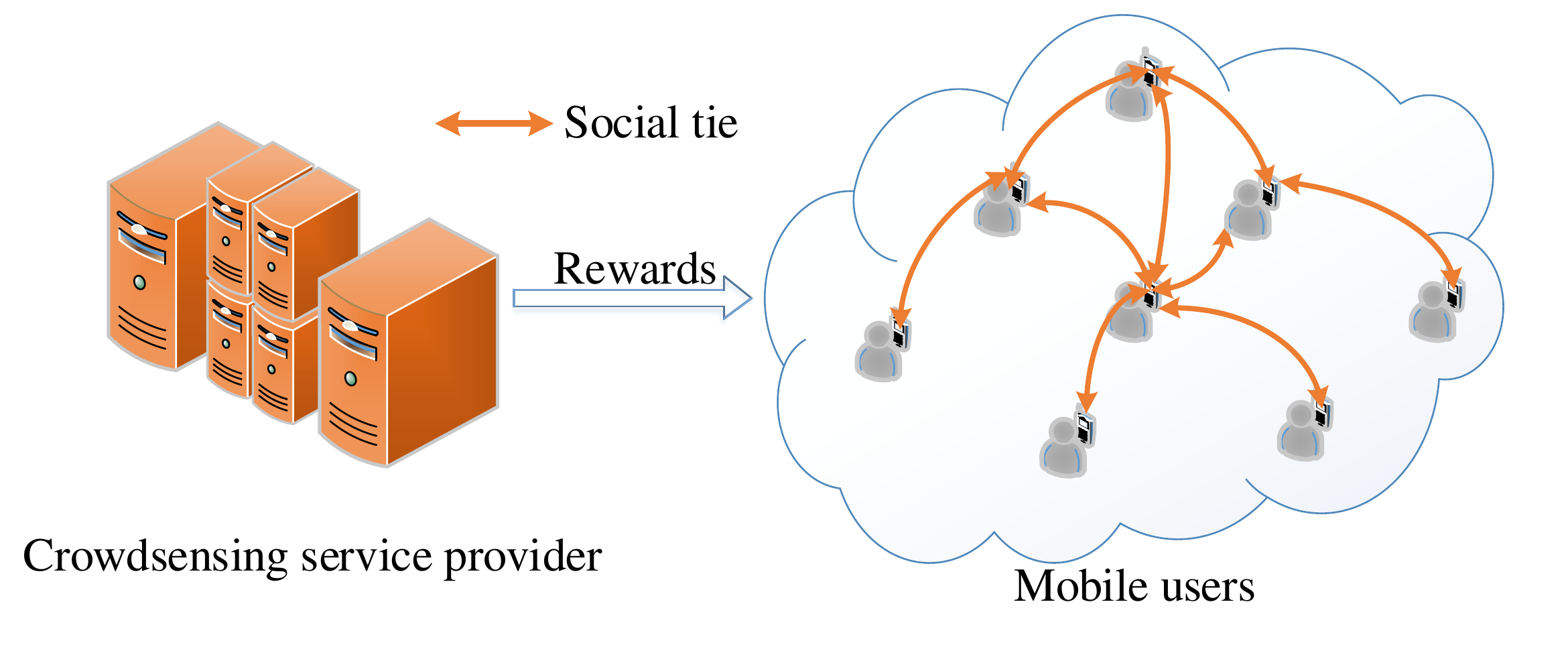}
\caption{A Mobile Crowdsensing Market with Social Network Effects.}\label{Fig:Model}
\end{figure}

In~\cite{duan2012incentive}, the authors proposed a reward-based collaboration mechanism, where the CSP announces a total reward to be shared among collaborators, and the task and reward are allocated if sufficient number of participants are willing to collaborate.
In~\cite{koutsopoulos2013optimal}, the authors presented a mechanism for participation level determination and reward allocation using optimal reverse auction, in which the CSP receives service queries and initiates an auction for user participation.
The authors developed a new framework called Steered Crowdsensing in~\cite{kawajiri2014steered}, which controls incentives by introducing gamifications with monetary reward to location-based services.
The authors in~\cite{luo2015crowdsourcing} applied Tullock contests to design incentive mechanisms, where the reward includes a fixed contest prize, and Tullock prize function depending on the winner's contribution.
In~\cite{peng2015pay}, the authors incorporated the consideration of data quality into the mechanism, and rewarded the participant depending on the quality of its collected data. The authors in~\cite{han2016taming,han2018quality} also considered the data quality and studied the tradoff between the recruiting cost and sensing robustness.
The comprehensive survey can be found in~\cite{zhang2016incentives}.

Unfortunately, all the papers addressed the incentive mechanism for Crowdsensing service without considering the interdependent behaviors of mobile users in social networks. Consequently, the complex user behaviors post a remarkable challenge to the operation of the Crowdsensing platform. Nevertheless, to the best of our knowledge, only one recent work~\cite{chen2016incentivizing} studied the incentive mechanism for Crowdsensing and exploited the network effects at the same time. However, the authors in~\cite{chen2016incentivizing} investigated the behaviors of mobile users under the global network effects\footnote{Global network effect refers to the phenomenon that a user will obtain higher value when its behavior aligns with any other users~\cite{easley2010networks}.}, which is not appropriate for the structure of an underlying social network. Social (local) network effects refer to the case where each user is only influenced directly by the decisions of other densely socially-connected users~\cite{xiong2017sponsor, xiong2018sponsor}. For example, in a mobile Crowdsensing platform for sharing road traffic information, a user (driver) can get a better route if more neighbourhood users (users in the same or nearby road) of this user join and contribute their traffic data~\cite{Waze, GreenGPS}. On the contrary, this user cannot obtain any benefits if the users in other distant roads join and share their traffic data. Therefore, this motivates us to explore the role of social network effects on the incentive mechanism of Crowdsensing service. The main contributions of this paper are:
\begin{itemize}
 \item We model the interaction between a CSP and mobile users as a two-stage game and analyze each stage systematically through backward induction;
 \item We exploit the social (local) network effects in the model, which, for the first time, utilizes the structural properties of the underlying social network, and fully characterizes the heterogeneity of the mobile participants;
 \item We propose the optimal incentive mechanism in terms of discriminatory reward and uniform reward with complete information, from which we obtain the analytical expression for optimal reward. Further, we derive the approximations of optimal reward with incomplete information;
 \item Performance evaluation demonstrates that the network effects strongly motivate the mobile users' participation, and thus improve the revenue of CSP.
\end{itemize}

The rest of this paper is organized as follows. Section~\ref{Sec:Model} describes the system model and the game formulation. In Section~\ref{Sec:Solution}, we analyze the mobile user participation level and optimal reward using backward induction under complete and incomplete scenarios. Section~\ref{Sec:Simulation} presents the performance evaluation. Section~\ref{Sec:Conclusion} summarizes the paper.

\section{System description and game formulation}\label{Sec:Model}
We model the interaction among Crowdsensing Service Provider (CSP) and the socially-aware participants, i.e., Mobile Users (MUs) as a simultaneous one-shot game, where the MUs' actions are to choose the participation level and the CSP provides the payment as a reward to incentivize the MUs. Consider a set of MUs ${\cal N} \buildrel \Delta \over = \{ 1, \ldots ,N\}$. Each MU $i \in {\cal N}$ determines its participation level, i.e., the effort level in participation (e.g.,  sensing data transmission frequency or sensing resolution), denoted by $x_i$ where ${x_i} \in (0,+ \infty)$.

Let $\mathbf{x} \buildrel \Delta \over = ({x_1}, \ldots ,{x_N})$ and $\mathbf{x}_{-i}$ denote the participation levels of all the MUs and all other MUs except MU $i$, respectively. The reward per effort unit provided to the MUs is given as: ${\bf{r}} = {[{{r}_{1}}, \ldots {{r}_{i}}, \ldots ,{{r}_{N}}]^\top}$. Then, the utility of MU $i$ is given by:
\begin{equation}\label{Eq:1}
{u_i}({x_i},{{\bf{x}}_{ - {{i}}}}) = {f_i}({x_i}) + {\Phi}({x_i},{{\bf{x}}_{ - {{i}}}}) + {r(x_i)} - {c}({x_i}).
\end{equation}
The first term ${f_i}(x)$ represents the private utility or internal effects that MU $i$ obtained from the participation, which can be formulated as ${f_i}(x_i) = {a_i}{x_i} - {b_i}{x_i}^2$, where $a_i > 0$ and $b_i > 0$ are the coefficients that capture the intrinsic value of the participation to different MUs with heterogeneity~\cite{candogan2012optimal, xiong2017sponsor, gong2015network}. For example, a healthcare Crowdsensing platform enables an MU to obtain a better understanding of its health conditions by keeping track of the diet, exercise and heart rate~\cite{swan2012health}. Additionally, the linear-quadratic function captures the decreasing marginal returns from participation.

The second term, $\Phi({x_i},\mathbf{{x}}_{-i})$ is denoted as the external benefits gained from the network effects, which is the key component from Eq.~(\ref{Eq:1}). In social networks, an MU can enjoy an additional benefit from information contributed or shared by the others~\cite{easley2010networks}. The existing work in crowdsensing explored the network effects of global nature, where the additional benefits due to new coming MUs are the same for all the existing MUs~\cite{chen2016incentivizing}. However, due to the structural properties of the underlying social network, it is more appropriate to consider the network effects locally in Crowdsensing service, i.e., the social network effects. Then, we introduce the adjacency matrix $\mathcal{G} = {[{g_{ij}}]_{i, j \in \mathcal{N}}}$. The elements in matrix $g_{ij}$ indicates the influence of MU $j$ on MU $i$, which can be unidirectional or bidirectional. In this paper, we consider the case where $g_{ij}= g_{ji}$, i.e., the social tie is reciprocal. Specifically, we adopt ${\sum _{j \in \cal N}}{g_{ij}}{x_i}{x_j}$ to represent the additional benefits obtained from the network effects, similar to that in~\cite{candogan2012optimal, xiong2017sponsor, xiong2018sponsor}.

The third term, ${r(x_i)}$, is the reward from CSP to the MU $i$, which is equal to ${r_i}{x_i}$ with the participation level based reward mechanism. The last term ${c}({x_i})$ denotes the cost associated to the participation level of the MU, e.g., energy consumption and network bandwidth consumed. Similar to~\cite{chen2016incentivizing}, we assume that the cost is equal to $c{x_i}$, where $c$ is the MU's unit cost. Then the utility of MU $i$ is expressed by:
\begin{equation*}\label{Eq:2}
{u_i}({x_i},{{\bf{x}}_{ - {{i}}}}, {\bf{r}}) =  {a_i}{x_i} - {b_i}{x_i}^2 +\sum\nolimits_{i = 1}^N {{g_{ij}}{x_i}{x_j}}   + {r_i}{x_i} - {c}{x_i}.
\end{equation*}

We now assume that the monopoly CSP operates and maintains the platform with a fixed cost, which is ignored for the simplicity of the analysis later. Then, the formulation of revenue for the CSP is given by the profit from total aggregated contribution of all MUs minus the total reward paid to MUs:
\begin{equation}\label{Eq:3}
\Pi   = \mu\sum\nolimits_{i=1}^N {( s{x_i} - t{x_i}^2)} - \sum\nolimits_{i=1}^N {r_i}{x_i}.
\end{equation}
Similar to~\cite{xiong2018sponsor}, we also use the linear-quadratic function to transform the MUs' participation level to the monetary revenue of CSP, which features the law of diminishing return: an MU's contribution increases with the MU's effort level but the marginal return decreases. If all the MUs do not contribute any effort, the utility received by the CSP is $s0-t0^2=0$. $\mu$ is an adjustable parameter representing the equivalent monetary worth of MUs' participation level, and $s, t> 0$ are coefficients capturing the concavity of the function.

We model the interaction between the CSP and the MUs using a two-stage single-leader-multiple-follower Stackelberg game as follows.
\begin{definition}{Two-stage reward-participation game:}
\begin{itemize}
 \item Stage I (Reward): The CSP determines the reward, aiming at the highest revenue:
   \[{{\bf{r}}^*} = \arg \max_{\bf{r}} \left\{\mu\sum\nolimits_{i=1}^N{( s{x_i} - t{x_i}^2)} - \sum\nolimits_{i=1}^N {r_i}{x_i}\right\};\]
 \item Stage II (Participation): Each MU $i \in {\cal N}$ chooses the participation level $x_i$, given the observed reward $\bf r$ and the participation levels of other MUs ${{\bf{x}}_{ - {{i}}}}$, with the goal to maximize its individual utility:     \[x_i^* = \arg \mathop {\max_{x_i}}  {u_i}({x_i},{{\bf{x}}_{ - {{i}}}}, \bf{r}).\]
\end{itemize}
\end{definition}
We solve this two-stage game by finding a subgame perfect equilibrium for the cases of discriminatory reward mechanism and uniform reward mechanism for all MUs. Furthermore, we differentiate between the complete information scenario where the CSP has knowledge about all $\{a_i\}_{i=1}^N$ and $\{b_i\}_{i=1}^N$, and the scenario in which only their expectations $\mathbb E[a]$ and $\mathbb E[b]$ are known by the CSP.

\section{Game equilibrium analysis}\label{Sec:Solution}
\subsection{Stage II: MUs' participation equilibrium}
Based on the definition of the Nash equilibrium, each MU chooses its participation level that is the best response. By setting the first-order derivative $\frac{{\partial {u_i}({x_i},{{\bf{x}}_{ - i}})}}{{\partial {x_i}}}$ to $0$, we obtain the best response of MU $i$ as follows:
\begin{equation}\label{Eq:4}
x_i^* =\max \left\{ 0, \frac{{{r_i} - c + {a_i}}}{{2{b_i}}} + \sum\nolimits_{j=1}^{N}\frac{{ {{g_{ij}}} }}{{2{b_i}}}{x_j} \right\}, \forall i.
\end{equation}
Each MU's best response includes two parts: $\frac{{{r_i} - c + {a_i}}}{{2{b_i}}} $ is independent from the strategies of the other MUs, and $\sum\nolimits_{j=1}^{N}\frac{{ {{g_{ij}}} }}{{2{b_i}}}{x_j}$ is dependent on the other MUs' participation levels due to underlying social network effects. Although the participation level strategy of each MU is obtained as in Eq.~(\ref{Eq:4}), the Nash equilibrium cannot be ensured to be unique or even exist since each MU may unboundedly increase its participation level if the other MUs' participation levels are large enough. Then, we present an sufficient assumption, under which there exists a unique Nash equilibrium as described in Theorem~1. Note that an MU has the limitation on participation level due to the battery capacity of a mobile device, and thus the Assumption~1 is reasonable.
\begin{assumption}
$\sum\nolimits_{j=1}^{N} {\frac{{{g_{ij}}}}{{{2b_i}}}}  < 1, \forall i$.
\end{assumption}
\begin{theorem}
Under Assumption~1, the existence and uniqueness of MU participation equilibrium, i.e., the Nash equilibrium of Stage~II in this Stackelberg game can be guaranteed.
\end{theorem}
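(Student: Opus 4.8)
The plan is to characterize the Stage~II equilibrium as a fixed point of the joint best-response map and to show, under Assumption~1, that this map is a contraction, which delivers existence and uniqueness simultaneously through the Banach fixed-point theorem.

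First I would define the best-response operator $B:[0,+\infty)^N \to [0,+\infty)^N$ componentwise by
\[
B_i(\mathbf{x}) = \max\!\left\{0,\ \frac{r_i - c + a_i}{2b_i} + \sum_{j=1}^{N}\frac{g_{ij}}{2b_i}\,x_j\right\},
\]
which is exactly the best response in Eq.~(\ref{Eq:4}); it is well defined since $u_i$ is strictly concave in $x_i$, so the maximizer is unique for each fixed $\mathbf{x}_{-i}$. A profile $\mathbf{x}^*$ is a Nash equilibrium of Stage~II if and only if $\mathbf{x}^* = B(\mathbf{x}^*)$. The domain $[0,+\infty)^N$ is a closed subset of $\mathbb{R}^N$, hence a complete metric space under the $\ell_\infty$ norm, and $B$ maps it into itself because every component is nonnegative by construction.

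Next I would establish the contraction property in the $\ell_\infty$ norm. Using that $t\mapsto\max\{0,t\}$ is $1$-Lipschitz and that the entries $g_{ij}$ are nonnegative, for any $\mathbf{x},\mathbf{y}\in[0,+\infty)^N$ and each $i$,
\[
\bigl|B_i(\mathbf{x}) - B_i(\mathbf{y})\bigr| \le \sum_{j=1}^{N}\frac{g_{ij}}{2b_i}\,|x_j - y_j| \le \left(\sum_{j=1}^{N}\frac{g_{ij}}{2b_i}\right)\|\mathbf{x}-\mathbf{y}\|_\infty.
\]
Putting $\delta := \max_i \sum_{j=1}^{N} g_{ij}/(2b_i)$, Assumption~1 gives $\delta < 1$, so $\|B(\mathbf{x}) - B(\mathbf{y})\|_\infty \le \delta\,\|\mathbf{x}-\mathbf{y}\|_\infty$. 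The Banach fixed-point theorem then yields a unique $\mathbf{x}^*$ with $B(\mathbf{x}^*) = \mathbf{x}^*$, i.e., a unique Stage~II Nash equilibrium, which finishes the proof.

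The step needing the most care is checking that the contraction estimate survives the truncation at $0$ and that $\ell_\infty$ is the appropriate norm: it is precisely the row-sum condition of Assumption~1 that makes $B$ non-expansive with modulus strictly below $1$, whereas a different norm would instead demand control of, e.g., the spectral radius of the matrix with entries $g_{ij}/(2b_i)$. I would also add a remark on why Assumption~1 cannot be dropped --- if some row sum reaches $1$, the linear part of $B_i$ amplifies neighbours' effort, the iterates may diverge, and (as noted before the theorem) an MU can raise its participation level without bound, so no equilibrium need exist. An alternative would be Rosen's theorem via strict diagonal dominance of the pseudo-gradient Jacobian $2\,\mathrm{diag}(b_i) - \mathcal{G}$, which is positive definite because $\mathcal{G} = \mathcal{G}^\top$; but the contraction route is more self-contained and proves both claims in one shot.
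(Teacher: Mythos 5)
Your proof is correct, but it follows a genuinely different route from the paper. The paper splits the claim in two: existence is obtained by bounding each equilibrium strategy by a constant $\widehat{x}$ so that the game becomes a concave game on a convex compact strategy set (continuity of $u_i$ plus $\partial^2 u_i/\partial x_i^2=-2b_i<0$), and uniqueness is then argued separately through the diagonal-dominance condition $-\partial^2 u_i/\partial x_i^2>\sum_j\left|\partial^2 u_i/\partial x_i\partial x_j\right|$ and an appeal to Moulin's dominance-solvability theorem. You instead treat the joint best-response map of Eq.~(\ref{Eq:4}) as a single operator on $[0,+\infty)^N$ and show it is an $\ell_\infty$-contraction with modulus $\delta=\max_i\sum_j g_{ij}/(2b_i)<1$ under Assumption~1, so the Banach fixed-point theorem delivers existence and uniqueness in one stroke; the $1$-Lipschitz truncation argument and the self-map property are handled correctly, and the only hypothesis you add --- nonnegativity of the $g_{ij}$ --- is exactly what the paper itself uses implicitly when it writes $\sum_j g_{ij}=\sum_j|g_{ij}|$. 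Your route is more self-contained (no compactness argument, no external dominance-solvability theorem) and has the bonus that it simultaneously proves convergence of the paper's best-response dynamics in Algorithm~1, i.e., it subsumes the paper's Proposition~1, whose proof is essentially the same contraction estimate; what the paper's two-part argument buys in exchange is robustness, since the concave-game existence step and the diagonal-dominance uniqueness step would survive generalizations (e.g., bounded strategy sets or spectral rather than row-sum conditions) under which the $\ell_\infty$-contraction property may fail.
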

\begin{proof}
\par
\textbf{Existence of MU participation equilibrium:} We denote $\bf x^*$ as the strategy profile in this MU participation sub-game, and $x^\dag_i$ as the largest participation level in $\bf x^*$. Then, we have
\begin{eqnarray}
x^\dag_i &=& \left(\frac{{{r_i} - c + {a_i}}}{{2{b_i}}} + \sum\nolimits_{j=1}^{N}\frac{{ {{g_{ij}}} }}{{2{b_i}}}{x_j} \right)^+ \nonumber \\ &\le& \frac{{{r_i} - c + {a_i}}}{{2{b_i}}} + \sum\nolimits_{j=1}^{N} {{x^\dag_i}\frac{{{g_{ij}}}}{{2{b_i}}}}\nonumber \\ &\le& \frac{{\left| {{{r_i} - c + {a_i}}} \right|}}{{2{b_i}}} + \sum\nolimits_{j=1}^{N} {{x^\dag_i}\frac{{\left| {{g_{ij}}} \right|}}{{2{b_i}}}}.
\end{eqnarray}
Thus, under Assumption 1, we have
\begin{equation}
x^\dag_i  \le \frac{{\left|{{r_i} - c + {a_i}} \right|}}{{2{b_i} - \sum\nolimits_{j=1}^{N} {\left| {{g_{ij}} } \right|} }} =  \widehat x.
\end{equation}
As a result, the strategy space $[0, \widehat x]$ is convex and compact and the utility function ${u_i}({x_i},{\mathbf{x}}_{-i})$ is continuous in $x_i$ and ${\bf x}_{-i}$. We also have the second-order derivative of MU's objective function
\begin{equation}
\frac{{{\partial ^2}{u_i}}}{{{\partial ^2}{x_i}}}=-2b_i<0.
\end{equation}
Thus, the MU participation sub-game is a concave game which admits the Nash equilibrium.
\par
\textbf{Uniqueness of MU participation equilibrium:} Firstly, we have $ - \frac{{{\partial ^2}{u_i}}}{{{\partial}{x_i}^2}} =  - (-2{b_i} + {g_{ii}}) = 2{b_i}$. Then, based on Assumption~1, we have $- \frac{{{\partial ^2}{u_i}}}{{{\partial}{x_i}^2}} > \sum\nolimits_{j=1}^N {{g_{ij}}} = \sum\nolimits_{j=1}^N {\left| {{g_{ij}}} \right|} \nonumber= \sum\nolimits_{j=1}^N{\left| { - \frac{{{\partial ^2}{u_i}}}{{\partial {x_i}{x_j}}}} \right|}$, which satisfies the dominance solvability condition, i.e., Moulin's Theorem~\cite{moulin1984dominance}. As a result, the uniqueness of MU participation equilibrium is guaranteed under Assumption~1. The proof is then completed.
\end{proof}
Then, we propose the best response dynamic algorithm to obtain the Nash equilibrium with respect to MUs' participation level, as shown in Algorithm 1. The algorithm iteratively updates MUs' strategies
based on their best response functions in Eq.~(\ref{Eq:4}), and converges to the Nash equilibrium of MU participation sub-game.
\begin{algorithm}\footnotesize
 \caption{Simultaneous best-response updating for finding Nash equilibrium of MU participation sub-game}
 \begin{algorithmic}[1]
 \STATE \textbf{Input:} \\
 Precision threshold $\varepsilon$, $x_i^{[0]} \leftarrow 0 $, $x_i^{[1]} \leftarrow 1 + \varepsilon$, $k\leftarrow 1$;
  \WHILE {$\left\|x_i^{[k]} - x_i^{[k-1]}\right\|_1>\varepsilon$}
  \FORALL {$i \in \cal N$}
   \STATE  $x_i^{[k+1]} = \left( \frac{{{r_i} - c + {a_i}}}{{2{b_i}}} + \sum\nolimits_{j=1}^{N} {{x_j^{[k]}}\frac{{{g_{ij}} }}{{2{b_i}}}} \right)^+$;
  \ENDFOR
  \STATE $k\leftarrow k + 1 $;
  \ENDWHILE
  \STATE \textbf{Return} ${\bf{x}}_i^{[k]}$;
 \end{algorithmic}\label{algorithm}
\end{algorithm}

\begin{proposition}
Algorithm 1 achieves the Nash equilibrium of MU participation sub-game.
\end{proposition}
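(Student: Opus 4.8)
The plan is to recognize Algorithm~1 as the fixed-point iteration of the best-response map and to show that, under Assumption~1, this map is a contraction; Banach's fixed-point theorem then yields convergence from the initialization in line~1 and identifies the limit with the unique Nash equilibrium of Theorem~1. First I would collect the best responses of Eq.~(\ref{Eq:4}) into a single operator $T:\mathbb{R}_{\ge 0}^N \to \mathbb{R}_{\ge 0}^N$ with $T_i({\bf x}) = \big(d_i + \sum_{j=1}^N M_{ij}x_j\big)^+$, where $d_i = \frac{r_i - c + a_i}{2b_i}$, $M_{ij} = \frac{g_{ij}}{2b_i}$, and $y^+ = \max\{0,y\}$. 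The simultaneous update in line~4 is exactly ${\bf x}^{[k+1]} = T({\bf x}^{[k]})$, and ${\bf x}^*$ is a Nash equilibrium of Stage~II if and only if ${\bf x}^* = T({\bf x}^*)$.

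Next I would show $T$ is a contraction in the $\ell_\infty$-norm, using two ingredients: (i) the componentwise truncation $y \mapsto y^+$ is nonexpansive, so $\|{\bf u}^+ - {\bf v}^+\|_\infty \le \|{\bf u} - {\bf v}\|_\infty$; and (ii) the matrix $M$ has induced $\ell_\infty$-norm equal to the maximum row sum $\rho := \max_i \sum_{j=1}^N \frac{g_{ij}}{2b_i}$, which is strictly below $1$ by Assumption~1 together with the finiteness of $\cal N$. Chaining the two gives, for all ${\bf x},{\bf y} \in \mathbb{R}_{\ge 0}^N$,
\[
\|T({\bf x}) - T({\bf y})\|_\infty \;\le\; \|M({\bf x} - {\bf y})\|_\infty \;\le\; \rho\,\|{\bf x} - {\bf y}\|_\infty ,
\]
so $T$ is a $\rho$-contraction on the complete metric space $(\mathbb{R}_{\ge 0}^N,\|\cdot\|_\infty)$ (the orthant is closed, hence complete).

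Then Banach's fixed-point theorem supplies a unique fixed point ${\bf x}^*$ and the geometric estimate $\|{\bf x}^{[k]} - {\bf x}^*\|_\infty \le \frac{\rho^{k}}{1-\rho}\|{\bf x}^{[1]} - {\bf x}^{[0]}\|_\infty$, so ${\bf x}^{[k]} \to {\bf x}^*$ from the stated initialization (indeed from any starting point), and by the ``if and only if'' above ${\bf x}^*$ is the Nash equilibrium of Theorem~1. It remains to check the stopping rule: since $\|{\bf x}^{[k+1]} - {\bf x}^{[k]}\|_\infty \le \rho^{k}\|{\bf x}^{[1]} - {\bf x}^{[0]}\|_\infty \to 0$ and all norms on $\mathbb{R}^N$ are equivalent, the $\ell_1$-gap tested in the \texttt{while} loop falls below $\varepsilon$ after finitely many iterations, and a standard a~posteriori bound shows the returned iterate lies in an $O(\varepsilon/(1-\rho))$ neighbourhood of ${\bf x}^*$, so Algorithm~1 outputs an $\varepsilon$-accurate equilibrium, converging to ${\bf x}^*$ itself as $\varepsilon \to 0$.

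The main obstacle is the contraction step. The delicate point is that Assumption~1 bounds the row sums of $M$, i.e.\ the $\ell_\infty$ operator norm, not the $\ell_1$ operator norm implicit in the algorithm's stopping criterion (the two differ because $g_{ij}=g_{ji}$ does not make $M$ symmetric, since $M$ is row-normalised by $2b_i$); one therefore has to carry out the fixed-point argument in $\ell_\infty$ (or a suitable weighted sup-norm) and transfer back by norm equivalence. One also has to state explicitly that the truncation $(\cdot)^+$ cannot increase distances — true, but it must be invoked rather than glossed over. Everything after that — Banach's theorem, the convergence rate, the a~posteriori error estimate — is routine.
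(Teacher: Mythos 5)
Your proposal is correct and follows essentially the same route as the paper: both arguments treat line~4 of Algorithm~1 as iterating the best-response map with coefficients $\kappa_{ij}=g_{ij}/(2b_i)$ and show it contracts in the $\ell_\infty$-norm because Assumption~1 bounds the row sums strictly below one, so the iterates converge to the unique equilibrium of Theorem~1. Your write-up is simply more explicit than the paper's on the points it glosses over (nonexpansiveness of the truncation $(\cdot)^+$, the strict modulus $\rho<1$, Banach's theorem, and the $\ell_1$ stopping rule), but the underlying argument is the same.
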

\begin{proof}
Let ${\kappa _{ij}} = \frac{{{g_{ij}}}}{{2{b_i}}}$ and $\Delta x_i^{[k]}$ = $x_i^{[k]} - x_i^*$, for all $i \in \cal N$. From step 4 of Algorithm 1, we have
\begin{eqnarray}
\left| {\Delta x_i^{[k]}} \right| &\le& \left| {\sum\nolimits_{j=1}^{N} {{\kappa _{ij}}} \Delta x_j^{[k]}} \right| \nonumber \\ &\le& \sum\nolimits_{j=1}^{N} {{\kappa _{ij}}} \left| {\Delta x_j^{[k]}} \right|.
\end{eqnarray}
We also have
\begin{equation}
{\left\| {\Delta {{\bf{x}}^{[k]}}} \right\|_\infty } = \mathop {\max }\limits_i \left| {\Delta x_i^{[k]}} \right|.
\end{equation}
Accordingly, we conclude that
\begin{eqnarray}
{\left\| {\Delta {{\bf{x}}^{[k]}}} \right\|_\infty } &\le& \mathop {\max }\limits_i \left( {\left| {\sum\nolimits_{j=1}^{N} {{\kappa _{ij}}} \Delta x_j^{[k]}} \right|} \right) \nonumber \\ &\le& \mathop {\max }\limits_i \sum\nolimits_{j=1}^{N} {{\kappa _{ij}}} \mathop {\max }\limits_i \left( {\left| {\Delta x_j^{[k]}} \right|} \right) \nonumber \\&=& \mathop {\max }\limits_i \sum\nolimits_{j=1}^{N} {{\kappa _{ij}}} {\left\| {\Delta {{\bf{x}}^{[k]}}} \right\|_\infty }.
\end{eqnarray}
Under Assumption 1, we have $\sum\nolimits_{j=1}^{N} {\kappa _{ij}}< 1, \forall i$. Thus, ${\left\| {\Delta {x^{[k]}}} \right\|_\infty } \le {\left\| {\Delta {x^{[k - 1]}}} \right\|_\infty }$, and the algorithm leads to a contraction mapping of ${\left\| {\Delta {x^{[k - 1]}}} \right\|_\infty }$. The proof is completed.
\end{proof}
For the ease of presentation, we have the following definitions, ${\bf B}:=diag(2b_1, 2b_2, \ldots, 2b_N)$, ${\bf a}:= [a_i]_{N \times 1}$, ${\bf 1}:= [1]_{N \times 1}$, ${\bf G}:=[g_{ij}]_{N \times N}$, ${\bf r}:= [r_i]_{N \times 1}$ and ${\bf{I}}$ is the $N \times N$ identity matrix. For the rest of the paper, similar to~\cite{xiong2017sponsor, zhou2017peer}, we consider the ideal situation where all the MUs have the positive participation levels at the Stackelberg equilibrium, i.e., a special case of Eq.~(\ref{Eq:4}). Then, with Lemma~1, we can rewrite Eq.~(\ref{Eq:4}) in a matrix form as follows:
\begin{equation}
{\bf{x}} = {\bf K}\left( {{\bf{a}} + {\bf{r}} - c{\bf{1}}} \right),
\end{equation}
where ${\bf K} = {\left( {{\bf{B}} - {\bf{G}}} \right)^{ - 1}}$.
\begin{lemma}
${{\bf{B}} - {\bf{G}}}$ is positive definite matrix, which is invertible.
\end{lemma}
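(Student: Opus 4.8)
The plan is to show that $\mathbf{B}-\mathbf{G}$ is a real symmetric, strictly diagonally dominant matrix with positive diagonal, and then invoke the standard fact that such a matrix is positive definite (so in particular $\det(\mathbf{B}-\mathbf{G})>0$ and the inverse $\mathbf{K}=(\mathbf{B}-\mathbf{G})^{-1}$ used in the matrix form of the equilibrium is well defined). First I would record symmetry: $\mathbf{B}=\mathrm{diag}(2b_1,\dots,2b_N)$ is diagonal and $\mathbf{G}=[g_{ij}]$ is symmetric because we assumed the social ties are reciprocal, $g_{ij}=g_{ji}$; hence $\mathbf{B}-\mathbf{G}$ is real symmetric, its eigenvalues are real, and it is orthogonally diagonalizable.

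Next I would inspect the rows. The $i$-th diagonal entry of $\mathbf{B}-\mathbf{G}$ is $2b_i-g_{ii}=2b_i>0$ (using $g_{ii}=0$, consistent with Eq.~(\ref{Eq:4}), and $b_i>0$), while the off-diagonal entry in column $j\ne i$ is $-g_{ij}$. Assumption~1 gives $\sum_{j=1}^{N}g_{ij}<2b_i$ for every $i$; since the ties are nonnegative we have $\sum_{j=1}^{N}g_{ij}=\sum_{j=1}^{N}|g_{ij}|$ (exactly the identity already used in the uniqueness part of Theorem~1), and because $g_{ii}=0$ this equals $\sum_{j\ne i}|{-g_{ij}}|$. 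Therefore $2b_i>\sum_{j\ne i}|{-g_{ij}}|$ for all $i$, i.e. $\mathbf{B}-\mathbf{G}$ is strictly diagonally dominant by rows.

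Then I would conclude positive definiteness in one of two equivalent ways. Either via the Gershgorin circle theorem: every eigenvalue $\lambda$ of $\mathbf{B}-\mathbf{G}$ lies in a disc centred at $2b_i$ with radius $\sum_{j\ne i}|g_{ij}|<2b_i$, which is contained in the open right half-plane, and since $\lambda$ is real we get $\lambda>0$. Or directly: for any $\mathbf{v}\ne\mathbf{0}$, using $|v_iv_j|\le\tfrac12(v_i^2+v_j^2)$, the symmetry of $\mathbf{G}$, and $g_{ii}=0$,
\[
\mathbf{v}^\top(\mathbf{B}-\mathbf{G})\mathbf{v}=\sum_{i=1}^{N}2b_iv_i^2-\sum_{i\ne j}g_{ij}v_iv_j\ \ge\ \sum_{i=1}^{N}\Bigl(2b_i-\sum_{j\ne i}g_{ij}\Bigr)v_i^2\ >\ 0 .
\]
Either argument yields that all eigenvalues are strictly positive, so $\det(\mathbf{B}-\mathbf{G})=\prod_{k}\lambda_k>0$ and $\mathbf{B}-\mathbf{G}$ is invertible, completing the proof.

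The main obstacle here is not the linear algebra, which is textbook, but keeping the modelling conventions straight: the argument hinges on $g_{ij}\ge 0$ (so the sum in Assumption~1 coincides with the absolute-value sum) and on $g_{ii}=0$ (so the diagonal of $\mathbf{B}-\mathbf{G}$ is exactly $2b_i$). If self-loops were permitted one would instead need $2b_i-g_{ii}>\sum_{j\ne i}|g_{ij}|$; this still follows from Assumption~1 but requires slightly more careful bookkeeping, so it is worth stating the $g_{ii}=0$ convention explicitly before the proof.
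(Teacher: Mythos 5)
Your proof is correct and follows essentially the same route as the paper: Assumption~1 gives strict diagonal dominance of $\mathbf{B}-\mathbf{G}$ with positive diagonal entries $2b_i$, and the Gershgorin circle theorem (together with symmetry, $g_{ij}=g_{ji}$) forces all eigenvalues to be strictly positive, hence positive definiteness and invertibility. Your extra direct quadratic-form estimate and your explicit bookkeeping of the conventions $g_{ii}=0$ and $g_{ij}\ge 0$ are welcome refinements but do not change the argument.
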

\begin{proof}
If Assumption~1 holds, we have
\begin{eqnarray}
{\left( {{\bf{B}} - {\bf{G}}} \right)_{ii}} &=& 2{b_i} - {g_{ii}} = 2{b_i}\nonumber\\ &>& \sum\nolimits_{j=1, j\ne i}^{N} {{g_{ij}}} = \sum\nolimits_{j=1}^{N} {{g_{ij}}} \nonumber \\ &=&- \sum\nolimits_{j=1}^{N} {{{\left( {{\bf{B}} - {\bf{G}}} \right)}_{ij}}} \nonumber \\ &=&  - \sum\nolimits_{j=1}^{N} {\left| {{{\left( {{\bf{B}} - {\bf{G}}} \right)}_{ij}}} \right|}.
\end{eqnarray}
Accordingly, ${{\bf{B}} - {\bf{G}}}$ is strictly diagonal dominant and all the diagonal elements, i.e., $2b_i$ are larger than $0$. Based on Gershgorin circle theorem~\cite{weisstein2003gershgorin}, every eigenvalue $\lambda$ of ${{\bf{B}} - {\bf{G}}}$ satisfies
\begin{equation}
\left| {{{\left( {{\bf{B}} - {\bf{G}}} \right)}_{ii}} - \lambda } \right| < \sum\nolimits_{j=1}^{N} {\left| {{{\left( {{\bf{B}} - {\bf{G}}} \right)}_{ij}}} \right|}.
\end{equation}
Accordingly, we know $\lambda > 0$, and thus ${{\bf{B}} - {\bf{G}}}$ is a positive definite matrix, from which its invertibility follows. The proof is completed.
\end{proof}
\subsection{Stage I: Optimal reward mechanism}
In this stage, the monopoly CSP determines the reward to be paid to the MUs, the objective of which is to maximize its overall revenue. We first focus on the discriminatory reward mechanism, uniform reward mechanism with the full information scenario. Furthermore, we investigate the single reward mechanism in the incomplete information scenario.
\subsubsection{Discriminatory reward mechanism with complete information}
Under reward discrimination, the CSP is able to provide different levels of reward for MUs to maximize its revenue. The revenue maximization problem can be formulated as follows:
\begin{equation}
\begin{aligned}
& \underset{\bf r}{\text{maximize}}
& & {\Pi} = \mu\sum\nolimits_{i=1}^{N}{( s{x_i} - t{x_i}^2)} - \sum\nolimits_{i=1}^{N} {r_i}{x_i}\\
& & & \quad = \mu( s{\bf{1}}^\top{\bf{x}}- {\bf{x}}^\top t{\bf{x}}) - {\bf{r}}^\top{\bf{x}}\\
& \text{subject to}
& & {\bf{x}} = {\bf{K}}\left( {\bf{a}} + {\bf{r}} - c{\bf{1}}\right).\\
\end{aligned}\label{Eq:5}
\end{equation}
By plugging $\bf x$ into the objective function in Eq.~(\ref{Eq:5}), we have $\Pi = \mu \left(s{{\bf{1}}^ \top }{\bf{K}}\left( {{\bf{a}} + {\bf{r}} - c{\bf{1}}} \right) - t{\left( {{\bf{a}} + {\bf{r}} - c{\bf{1}}} \right)^ \top }{{\bf{K}}^2}\left( {{\bf{a}} + {\bf{r}} - c{\bf{1}}} \right)\right) - {{\bf{r}}^ \top }{\bf{K}}\left( {{\bf{a}} + {\bf{r}} - c{\bf{1}}} \right)$. Setting the first-order condition of objective function in Eq.~(\ref{Eq:5}) with respect to $\bf r$ to zero, i.e., $\frac{{\partial \Pi }}{{\partial {\bf{r}}}} = 0$, we obtain $\mu \left(s{\bf{K1}} - 2t{{\bf{K}}^2}\left( {{\bf{a}} + {\bf{r}} - c{\bf{1}}} \right)\right) - {\bf{K}}\left( {{\bf{a}} + {\bf{r}} - c{\bf{1}}} \right) - {\bf{Kr}} = 0$. Then, we have $\mu \left(s{\bf{K1}} - 2t{{\bf{K}}^2}\left({{\bf{a}} - c{\bf{1}}} \right)\right) - {\bf{K}}\left( {{\bf{a}} - c{\bf{1}}} \right) = \left( {2{\bf{K}} + 2\mu t{{\bf{K}}^2}} \right){\bf{r}}$. Finally, we obtain the optimal value ${\bf r}^*$, which is represented as follows:
\begin{equation}\label{Eq:6}
{\bf r}^* = {\left(2{\bf{I}}+ 2\mu t{\bf{K}}\right)^{ - 1}}\left\{ \mu \left[s{\bf{1}} - 2t{\bf{K}}\left( {{\bf{a}} - c{\bf{1}}} \right)\right] - \left( {{\bf{a}} - c{\bf{1}}} \right)\right\}.
\end{equation}
\subsubsection{Uniform reward mechanism with complete information}
In this case, the CSP can only choose a single uniform reward to be paid to all the MUs, i.e., $r_i = r$, for all $i$. Then, the optimization problem is given by
\begin{equation}
\begin{aligned}
& \underset{r}{\text{maximize}}
& & {\Pi} = \mu\sum\nolimits_{i=1}^{N}{( s{x_i} - t{x_i}^2)} - {r}\sum\nolimits_{i=1}^{N}{x_i}\\
& & & \quad = \mu( s{\bf{1}}^\top{\bf{x}}- {\bf{x}}^\top t{\bf{x}}) - {{r}}{\bf{1}}^\top{\bf{x}}\\
& \text{subject to}
& & {\bf{x}} = {\bf{K}}\left[{\bf{a}} + (r - c){\bf{1}}\right].\\
\end{aligned}\label{Eq:7}
\end{equation}
Similarly, we eliminate $\bf x$ from objective function in Eq.~(\ref{Eq:7}), then we obtain $\Pi = \mu \left(s{{\bf{1}}^ \top }{\bf{K}}\left[ {{\bf{a}} + (r - c){\bf{1}}} \right] - t{\left[ {{\bf{a}} + (r - c){\bf{1}}} \right]^ \top }{{\bf{K}}^2}\left[ {{\bf{a}} + (r - c){\bf{1}}} \right]\right)$ $- r{{\bf{1}}^ \top }{\bf{K}}\left[ {{\bf{a}} + (r - c){\bf{1}}} \right]$. Then, we evaluate its first-order optimality condition with respect to the reward $r$, which yields that $\frac{{\partial \Pi }}{{\partial r}} = \mu \{ (s{{\bf{1}}^ \top }{\bf{K1}} - 2t{\left[ {{\bf{a}} + (r - c){\bf{1}}} \right]^ \top }{{\bf{K}}^2}{\bf{1}}\}  - {{\bf{1}}^ \top }{\bf{K}}\left[ {{\bf{a}} + (r - c){\bf{1}}} \right] - r{{\bf{1}}^ \top }{\bf{K1}} = 0$. As a result, we obtain the optimal value of the uniform reward, which is represented by
\begin{multline}\label{Eq:8}
{r^*} = {\left(2\mu t{{\bf{1}}^ \top }{{\bf{K}}^2}{\bf{1}} + 2{{\bf{1}}^ \top }{\bf{K}}\right)^{ - 1}}\Big\{ \mu \big[s{{\bf{1}}^ \top }{\bf{K1}} - 2t{({\bf{a}} - c{\bf{1}})^ \top }
\\ \times {{\bf{K}}^2}{\bf{1}}\big]  - {{\bf{1}}^ \top }{\bf{K}}\left({\bf{a}} - c{\bf{1}}\right)\Big\}.
\end{multline}
\subsubsection{Uniform reward mechanism with incomplete information}
We next explore the uniform reward mechanism for the incomplete information scenario, similar to that in~\cite{zhou2017peer}. Under the additional assumption as follows, we obtain the lower bound of single reward, i.e., single reward bound as in Theorem~2.
\begin{assumption}
$c \ge {\bar a} + \mu s$, where $\bar a = \sum\nolimits_{i = 1}^N {{a_i}} /N.$
\end{assumption}
\begin{theorem}
In the scenario of incomplete information, i.e., only the expectations of $\{a_i\}_{i=1}^n$ and $\{b_i\}_{i=1}^n$ are known and denoted by ${\mathbb E}[a]$ and ${\mathbb E}[b]$, the optimal uniform reward is bounded as shown in Eq.~(\ref{Eq:10}).
\end{theorem}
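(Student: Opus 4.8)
The plan is to start from the closed-form optimum already derived for the complete-information case, Eq.~(\ref{Eq:8}), because that expression is the revenue-maximizing uniform reward for \emph{every} realization of the parameters; what changes under incomplete information is only that the heterogeneous data $\{a_i\}$, $\{b_i\}$ entering Eq.~(\ref{Eq:8}) must be replaced by the surrogates $\mathbb E[a]$ and $\mathbb E[b]$, so the whole job is to carry out this replacement while tracking the direction of each inequality so a genuine bound survives. Write $r^* = N_1/D_1$ with $N_1 = \mu s\,\mathbf 1^\top\mathbf K\mathbf 1 - 2\mu t(\mathbf a-c\mathbf 1)^\top\mathbf K^2\mathbf 1 - \mathbf 1^\top\mathbf K(\mathbf a-c\mathbf 1)$ and $D_1 = 2\mu t\,\mathbf 1^\top\mathbf K^2\mathbf 1 + 2\,\mathbf 1^\top\mathbf K\mathbf 1$. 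The first step is to upgrade Lemma~1: its proof already exhibits $\mathbf B-\mathbf G$ as having nonpositive off-diagonal entries and being strictly diagonally dominant with positive diagonal, hence a symmetric M-matrix, so $\mathbf K=(\mathbf B-\mathbf G)^{-1}$ is positive definite \emph{and} entrywise nonnegative. Consequently $D_1>0$ (with $\mathbf 1^\top\mathbf K^2\mathbf 1=\|\mathbf K\mathbf 1\|_2^2$ by symmetry of $\mathbf K$), and $\mathbf K\mathbf 1$, $\mathbf K^2\mathbf 1$ are nonnegative vectors, which is what controls the signs below.

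The second step uses Assumption~2. Put $\mathbf z := c\mathbf 1-\mathbf a$, so that $\mathbf 1^\top\mathbf z = N(c-\bar a)\ge N\mu s>0$ and $N_1 = \mu s\,\mathbf 1^\top\mathbf K\mathbf 1 + 2\mu t\,\mathbf z^\top\mathbf K^2\mathbf 1 + \mathbf 1^\top\mathbf K\mathbf z$. Since $r^*$ is affine in $\mathbf a$ for fixed $(\mathbf B,\mathbf G)$ — namely $r^* = D_1^{-1}\big(\gamma - (2\mu t\,\mathbf K^2\mathbf 1+\mathbf K\mathbf 1)^\top\mathbf a\big)$ with $\gamma$ independent of $\mathbf a$ and the coefficient vector nonnegative — the natural move is to substitute $\mathbf a\mapsto\bar a\mathbf 1$, after which $\mathbf z^\top\mathbf K^2\mathbf 1$ and $\mathbf 1^\top\mathbf K\mathbf z$ become $(c-\bar a)\,\mathbf 1^\top\mathbf K^2\mathbf 1$ and $(c-\bar a)\,\mathbf 1^\top\mathbf K\mathbf 1$, both nonnegative under Assumption~2; a short computation of this kind in fact collapses $N_1/D_1$ to something no smaller than $\mu s$. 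The residual dependence on the unobserved $\{b_i\}$ then lives entirely inside $\mathbf K$, and I would bracket the aggregates $\mathbf 1^\top\mathbf K\mathbf 1$ and $\mathbf 1^\top\mathbf K^2\mathbf 1$ by homogenizing $\mathbf B\mapsto 2\mathbb E[b]\mathbf I$ and combining (i) the Neumann expansion $\mathbf K = (\mathbf I-\mathbf B^{-1}\mathbf G)^{-1}\mathbf B^{-1} = \sum_{k\ge 0}(\mathbf B^{-1}\mathbf G)^k\mathbf B^{-1}$, convergent by Assumption~1 and yielding $\mathbf K\ge\mathbf B^{-1}$ entrywise (hence $\mathbf 1^\top\mathbf K\mathbf 1\ge N/(2\mathbb E[b])$), with (ii) the Gershgorin bound $\lambda(\mathbf B-\mathbf G)\le 2\mathbb E[b]+\max_i\sum_j g_{ij}$ used already in Lemma~1, which upper-bounds $D_1$. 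Feeding the floor $\mu s\,\mathbf 1^\top\mathbf K\mathbf 1$, the nonnegative $\mathbf z$-terms, and these brackets into $r^* = N_1/D_1$ produces the inequality stated in Eq.~(\ref{Eq:10}).

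The step I expect to be the main obstacle is the pair of bilinear forms $(\mathbf a-c\mathbf 1)^\top\mathbf K^2\mathbf 1$ and $\mathbf 1^\top\mathbf K(\mathbf a-c\mathbf 1)$: they couple the unobservable heterogeneous vector $\mathbf a$ with the only-partly-known matrix $\mathbf K$, so they do not reduce to scalar averages for free. Entrywise nonnegativity of $\mathbf K$ together with the \emph{mean}-level hypothesis $c-\bar a\ge\mu s$ of Assumption~2 is enough to sign these terms, but turning that into a clean, computable magnitude forces a choice: either lean on the affine/monotone dependence on $\mathbf a$ so that the homogeneous substitution $\mathbf a\mapsto\bar a\mathbf 1$ is itself the extremal configuration for the bound being claimed, or resort to a Cauchy--Schwarz estimate such as $|\mathbf 1^\top\mathbf K\mathbf z|\le\|\mathbf K\mathbf 1\|_2\,\|\mathbf z\|_2$, which then must be paired with some control on the dispersion of $\mathbf a$ about $\bar a$. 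Reconciling the ``average-only'' form of Assumption~2 with whichever route is taken, and checking that the resulting bound is nontrivial rather than vacuously loose, is the delicate part; everything else — positivity of $D_1$, concavity of $\Pi$ in $r$ (its leading coefficient is $-\mu t\|\mathbf K\mathbf 1\|_2^2-\mathbf 1^\top\mathbf K\mathbf 1<0$, so Eq.~(\ref{Eq:8}) is a true maximizer), and convergence of the Neumann series — follows directly from Lemma~1, Assumption~1, and the derivations already in Section~\ref{Sec:Solution}.
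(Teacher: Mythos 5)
Your plan has a genuine gap, and it sits exactly where you yourself flag ``the main obstacle.'' The paper's bound in Eq.~(\ref{Eq:10}) is not obtained by deterministic bracketing of $\mathbf 1^\top\mathbf K\mathbf 1$ and $\mathbf 1^\top\mathbf K^2\mathbf 1$; its essential device is probabilistic: after reducing the reward to the form $\frac{c-\bar a}{2}+\frac{\mu s}{2}-\frac{\mu t}{N}(\mu s+\bar a-c)\,E_b\!\left(\mathbf 1^\top\left[2(\mathbf B-\mathbf G)+2\mu t\mathbf I\right]^{-1}\mathbf 1\right)$, one proves (via the positive-semidefinite block matrix / Schur-complement argument) that $\mathbf X\mapsto \mathbf 1^\top\mathbf X^{-1}\mathbf 1$ is convex on positive definite matrices, applies Jensen's inequality to the expectation over the random $\{b_i\}$, and uses Assumption~2 only to fix the sign of the coefficient $-(\mu s+\bar a-c)\ge 0$ so the Jensen inequality points the right way. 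Your step (i)--(ii) cannot substitute for this: the Neumann bound $\mathbf K\ge\mathbf B^{-1}$ gives $\mathbf 1^\top\mathbf K\mathbf 1\ge\sum_i 1/(2b_i)$, a random quantity, and passing from that (or from a Gershgorin eigenvalue bound on a realized $\mathbf B-\mathbf G$) to an expression in $\mathbb E[b]$ does not hold realization by realization -- it is precisely a Jensen-type step you have not supplied. Moreover, homogenizing $\mathbf B\mapsto 2\mathbb E[b]\mathbf I$ while discarding the off-diagonal structure through Gershgorin would not reproduce Eq.~(\ref{Eq:10}), which retains the full matrix $2\mathbb E_b[\mathbf B]-2\mathbf G+2\mu t\mathbf I$ inside a single inverse.

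There is also a structural mismatch in the starting point. You begin from the uniform optimum Eq.~(\ref{Eq:8}), i.e.\ a ratio $N_1/D_1$ of two quadratic forms in $\mathbf K$; bounding the expectation of such a ratio with your tools is intractable and is not what the paper does. The paper instead averages the discriminatory optimum Eq.~(\ref{Eq:6}) over users, $\overline r=\frac1N\mathbf 1^\top\mathbf r^*$, which is the algebraically crucial move: the factor $(2\mathbf I+2\mu t\mathbf K)^{-1}$ combines with $\mathbf K=(\mathbf B-\mathbf G)^{-1}$ into the single inverse $\left[2(\mathbf B-\mathbf G)+2\mu t\mathbf I\right]^{-1}$, and linearity in $\mathbf a$ lets the $a$-dependence collapse to $\bar a$, isolating exactly the convex functional $\mathbf 1^\top\mathbf X^{-1}\mathbf 1$ to which Jensen applies. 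Your own text concedes that the coupling of $\mathbf a$ with $\mathbf K$ and the ``average-only'' form of Assumption~2 are left unreconciled (``either \dots or \dots is the delicate part''), so as written the proposal is an incomplete sketch whose concrete steps would not arrive at the stated inequality; the missing ideas are the averaging of the discriminatory reward, the convexity of $\mathbf 1^\top\mathbf X^{-1}\mathbf 1$, and Jensen's inequality over the random $\mathbf B$.
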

\begin{proof}
Since we have the optimal value of ${{\bf{r}}^*} = {\left( {2{\bf{I}} + 2\mu t{\bf{K}}} \right)^{ - 1}}\left\{ {\mu \left[ {s{\bf{1}} - 2t{\bf{K}}\left( {{\bf{a}} - c{\bf{1}}} \right)} \right] - \left( {{\bf{a}} - c{\bf{1}}} \right)} \right\}$, we can easily obtain the average value of total offered reward, i.e., $\overline r  =\sum\nolimits_i {{r_i}}/N$. Specifically, we derive the expression of $\overline r$, as shown in Eq.~(\ref{Eq:long}). After taking the expectation with respect to the random variables $\{a_i\}_{i=1}^N$ and $\{b_i\}_{i=1}^N$, we obtain the analytical expression for the optimal single reward bound with incomplete information in Eq.~(\ref{Eq:10}), where $E_b(\Xi)$ is to take the expected value for random variable $b_i$ in the component of $\Xi$.
\begin{figure*}[ht]
\begin{equation}\label{Eq:long}
\overline r  = \frac{1}{N}{{\bf{1}}^ \top }{\left( {2{\bf{I}} + 2\mu t{\bf{K}}} \right)^{ - 1}}\left\{ {\mu \left[ {s{\bf{1}} - 2t{\bf{K}}\left( {{\bf{a}} - c{\bf{1}}} \right)} \right] - \left( {{\bf{a}} - c{\bf{1}}} \right)} \right\} = \frac{{ - {{\bf{1}}^ \top }\left( {{\bf{a}} - c{\bf{1}}} \right)}}{{2N}} + \frac{{{{\bf{1}}^ \top }}}{N}{\left( {2{\bf{I}} + 2\mu t{\bf{K}}} \right)^{ - 1}}\left[ {\mu s{\bf{1}} - \mu t{\bf{K}}\left( {{\bf{a}} - c{\bf{1}}} \right)} \right].
\end{equation}
\begin{eqnarray}\label{Eq:9}
{r_u} &=& \frac{{c - \overline a }}{2} + {E_b}\left( {\frac{1}{N}{{\bf{1}}^ \top }{{\left[ {2\left( {{\bf{B}} - {\bf{G}}} \right) + 2\mu t{\bf{I}}} \right]}^{ - 1}}\left[ {\mu s\left( {{\bf{B}} - {\bf{G}}} \right) - \mu t\left( {\overline a  - c} \right){\bf{I}}} \right]{\bf{1}}} \right) \nonumber \\
&=& \frac{{c - \bar a}}{2} + {E_b}\left( {\frac{1}{N}{{\bf{1}}^ \top }{{\left[ {2\left( {{\bf{B}} - {\bf{G}}} \right) + 2\mu t{\bf{I}}} \right]}^{ - 1}}\left[ {\mu s\left( {{\bf{B}} - {\bf{G}}} \right) + \mu {s^2}t{\bf{I}} - \mu {s^2}t{\bf{I}} - \mu t\left( {\bar a - c} \right){\bf{I}}} \right]{\bf{1}}} \right) \nonumber\\
&=& \frac{{c - \bar a}}{2} + \frac{{\mu s}}{{2N}}{{\bf{1}}^ \top }{\bf{1}} + {E_b}\left( {\frac{1}{N}{{\bf{1}}^ \top }{{\left[ {2\left( {{\bf{B}} - {\bf{G}}} \right) + 2\mu t{\bf{I}}} \right]}^{ - 1}}\left[ { - {\mu ^2}st - \mu t\left( {\bar a - c} \right)} \right]{\bf{1}}} \right) \nonumber\\
&=& \frac{{c - \bar a}}{2} + \frac{{\mu s}}{2} - \frac{{\mu t}}{N}\left( {\mu s + \bar a - c} \right){E_b}\left( {{{\bf{1}}^ \top }{{\left[ {2\left( {{\bf{B}} - {\bf{G}}} \right) + 2\mu t{\bf{I}}} \right]}^{ - 1}}{\bf{1}}} \right)
\end{eqnarray}
\begin{equation}\label{Eq:10}
\Rightarrow{r_u} \ge \frac{{c - {\mathbb E}\left[ a \right]}}{2} + \frac{{\mu s}}{2} - \frac{{\mu t}}{N}\left( {\mu s +{\mathbb E}\left[ a \right] - c} \right){\bf{1}}^{\top}{\left[ {2{\mathbb E}{_b}\left[ {\bf{B}} \right] - 2{\bf{G}} + 2\mu t{\bf{I}}} \right]^{ - 1}}{\bf{1}}
\end{equation}
\hrulefill
\end{figure*}
If $c \ge {\bar a} + \mu s$ holds, we denote $${\bf D} = 2{\bf B}_1  - 2{\bf G} +2\mu t {\bf I} >0$$ and $${\bf E} = 2{\bf B}_2 - 2{\bf G} +2\mu t {\bf I} >0,$$ and we have
\begin{multline*}
\beta \underbrace {\left[ {\begin{array}{*{20}{c}}
{{{\bf{1}}^ \top }{\bf{D1}}}&{{{\bf{1}}^ \top }}\\
{\bf{1}}&{\bf{D}}
\end{array}} \right]}_{ \ge 0} + (1 - \beta )\underbrace {\left[ {\begin{array}{*{20}{c}}
{{{\bf{1}}^ \top }{\bf{E1}}}&{{{\bf{1}}^ \top }}\\
{\bf{1}}&{\bf{E}}
\end{array}} \right]}_{ \ge 0} =\\
 {\left[ {\begin{array}{*{20}{c}}
{\beta {{\bf{1}}^ \top }{{\bf{D}}^{ - 1}}{\bf{1}} + (1 - \beta ){{\bf{1}}^ \top }{{\bf{E}}^{ - 1}}{\bf{1}}}&{{{\bf{1}}^ \top }}\\
{\bf{1}}&{\beta {\bf{D}} + (1 - \beta ){\bf{E}}}
\end{array}} \right]}  \ge 0,
\end{multline*}
where $\beta \in (0, 1)$. Using Schur decomposition, we obtain that
\begin{equation}
\beta {{\bf{1}}^ \top }{{\bf{D}}^{ - 1}}{\bf{1}} + (1 - \beta ){{\bf{1}}^ \top }{{\bf{E}}^{ - 1}}{\bf{1}} \ge {{\bf{1}}^ \top }\left[ {\beta {\bf{D}} + (1 - \beta ){\bf{E}}} \right]{\bf{1}}.
\end{equation}
Then, we know that $f({\bf{X}}) = {{{\bf{1}}^ \top }{\bf{X}}^{-1} \bf{1}}$ is convex for all positive definite matrix $\bf{X}$. Based on Jensen's inequality in the multivariate case on random variable $\bf X$,
\begin{equation}
{{\mathbb E}_Y}\left[g({\bf{X}})\right] \ge g\left[{{\mathbb E}_Y}\left({\bf{X}}\right)\right],
\end{equation}
we obtain the lower bound of optimal reward with incomplete information, i.e., single reward bound as shown in Eq.~(\ref{Eq:10}). The proof is completed.
\end{proof}

\section{Performance evaluation}\label{Sec:Simulation}
\begin{figure}[t]
\centering
\includegraphics[width=.45\textwidth]{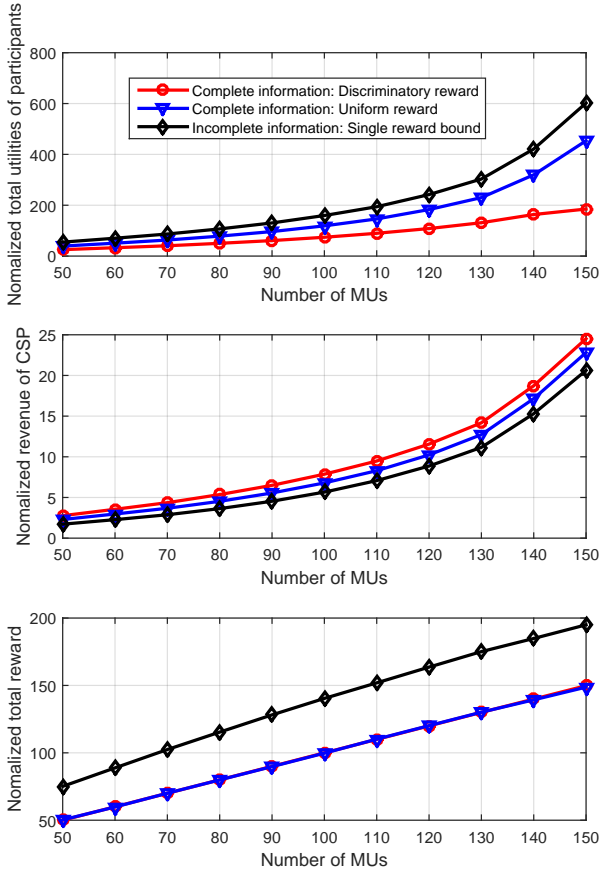}
\caption{The impact of total number of MUs on the Crowdsensing service provider and mobile participants.}\label{Fig:number}
\end{figure}
\begin{figure}[t]
\centering
\includegraphics[width=.45\textwidth]{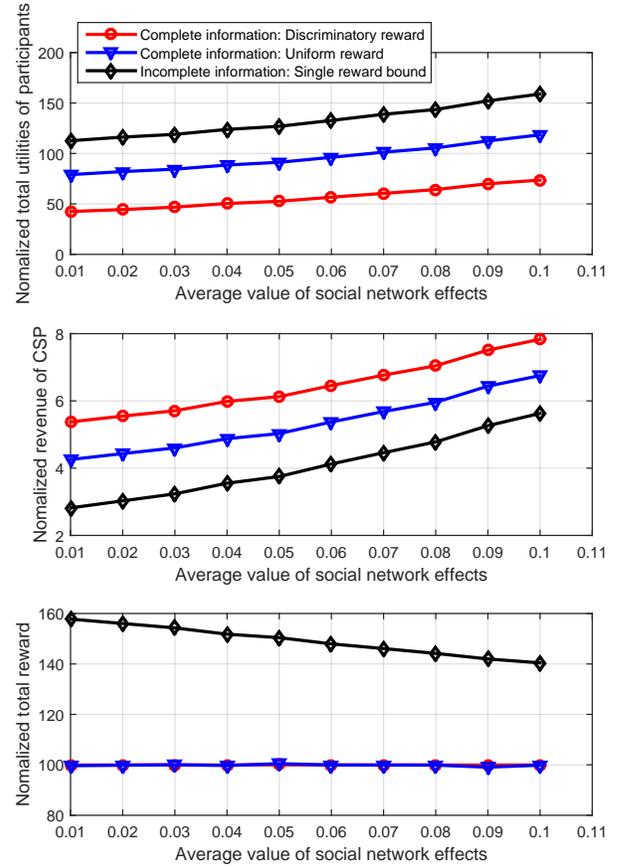}
\caption{The impact of average value of social network effects on the Crowdsensing service provider and mobile participants.}\label{Fig:socialtie}
\end{figure}
In this section, we evaluate the performance of the proposed incentive mechanism of CSP, and investigate the impacts of different parameters of mobile social networks on the performance. We consider a group of $N$ MUs, i.e., mobile participants, in a social network and set the parameters as follows. We assume the intrinsic parameters of MUs, $a_i$ and $b_i$ follow the normal distribution $\mathcal{N}(\mu_a, 2.5)$ and $\mathcal{N}(\mu_b,2.5)$. In addition, the social tie $g_{ij}$ between any two participants $i$ and $j$ follows a normal distribution $\mathcal{N}(\mu_g, 1)$. The default parameters are set as: $c=15+\alpha$, $\alpha=1$, $\mu=0.01$, $s=20$, $t=0.05$, $\mu_a=\mu_b=15$, $\mu_g=0.1$ and $N=100$. Note that some of these parameters are varied according to the evaluation scenarios.

As expected and verified in Fig.~\ref{Fig:number} and Fig.~\ref{Fig:socialtie}, the discriminatory reward under complete information scenario yields the largest revenue for the CSP, compared with other mechanisms. Intuitively, this is because the CSP can adjust the reward according to individual MU's effort and contribution, which is proven by Fig.~\ref{Fig:index}. Meanwhile, setting the lower bound reward under incomplete information scenario does not sacrifice too much profit, compared with the uniform reward mechanism, especially when the number of MUs is higher and the social network effects is stronger.

We first evaluate the impact of the total number of MUs on the proposed three incentive mechanisms, as illustrated in Fig.~\ref{Fig:number}. As the number of MUs increases, the total utilities of participants and the revenue of CSP also increase under all these mechanisms. The marginal increase of the total utilities of participants and the revenue of CSP are also greater, as the total number of MUs is higher. This is because when the total number of MUs increases, the social neighbourhood MUs of each MU also increases. Owing to the underlying social network effects, the MUs are motivated by their social neighbours to have higher participation levels, and the revenue of CSP is improved accordingly. In addition, with the increase of total number of mobile participants, the total offered reward increases since the CSP needs to remunerate more MUs to participate, in order to attain a satisfying revenue gain. In particular, the discriminatory and uniform reward mechanisms enable the CSP to reduce the reward, i.e., the cost, and therefore achieve a higher revenue gain in turn. Fig.~\ref{Fig:socialtie} depicts the impact of average value of social network effects on two entities of this network, i.e., the CSP and MUs. We observe that as the social network effects become stronger, the total utilities of participants and the revenue of CSP also increase. Since when the social ties is stronger, the additional benefits obtained from social network effects are greater. In other words, the socially-aware MUs are motivated by each other and have higher participation levels consequently. When the participation levels are high enough, the CSP is able to offer less reward to save money. In turn, the total utilities of participants and the revenue of CSP are improved.

\begin{figure}[t]
\centering
\includegraphics[width=.45\textwidth]{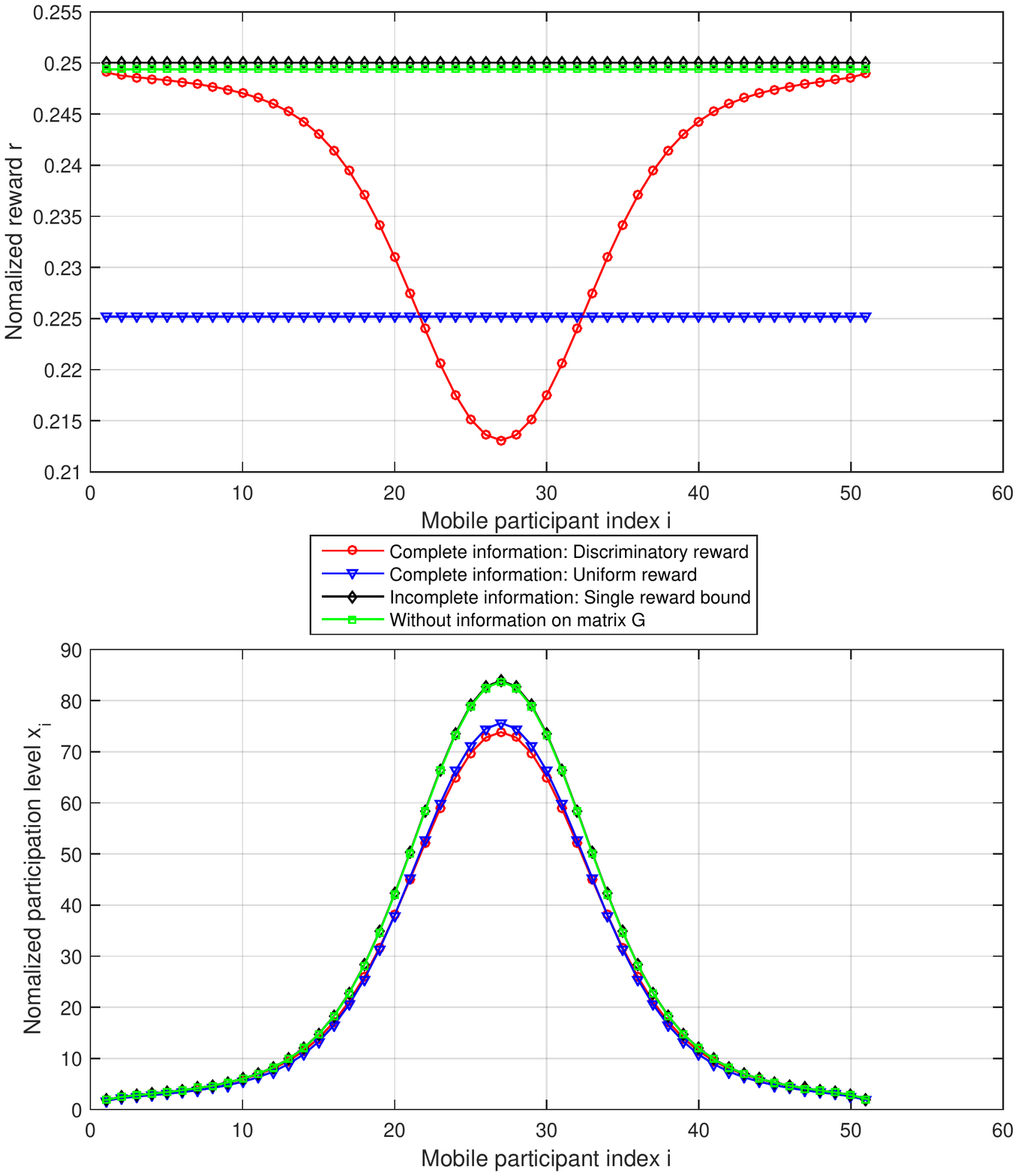}
\caption{A case illustration of distribution of normalized reward and participation level.}\label{Fig:index}
\end{figure}
Then, to explore the impacts of social network effects on each specific participant, we investigate the optimal reward and resulting MUs' participation level, and we adopt the following parameters. $N=51$, $b=0.1$, $a=15$, $c=15 + \alpha$, $\alpha=1$, $\mu=0.01$, $s=50$, $t=0.05$. The adjacency matrix $G$ is generated based on the rules, shown as follows:
\begin{equation}\label{Eq:simulation}
\begin{cases}
   {g_{i,i + 1}} = 0.2\times\left(0.5 - {{\left(0.5 - \frac{{i - 1}}{N}\right)}^2}\right), &\mbox{$i \in [1,N - 1]$};\\
   {g_{i + 1,i}} = 0.2\times\left(0.5 - {{\left(0.5 - \frac{{i - 1}}{N}\right)}^2}\right), &\mbox{$i \in [1,N - 1]$};\\
   {g_{i,j}} = 0,&\mbox{otherwise}.
\end{cases}
\end{equation}

From Eq.~(\ref{Eq:simulation}), only participants who are adjacent in participant indexes (neighbours) can affect each other. From Fig.~\ref{Fig:index}, we notice that the CSP offers each participant the same reward when it knows nothing about matrix $G$. Given the reward incentive from the CSP, the MUs have different participation level equilibrium shown in Fig.~\ref{Fig:index}, where we observe that the participation levels of the MUs are socially related to each other. In particular, the $27$th MU is the most influenced one in this network because it has the highest participation level given the same reward incentive. On the contrary, the $1$st and the $51$st MUs are the most influencing ones. Therefore, with the knowledge on matrix $G$, the CSP is likely to offer more reward to the $1$st and the $51$st MUs and less to the $27$th MU, under the discriminatory reward incentive mechanism. This is because the CSP wants to have the highest participation level from the participant with the lowest cost and thus have a higher revenue gain. However, under the other two reward mechanisms, the CSP can only offer the same reward to the participants. In this case, the CSP usually offers higher reward and promotes the participants to attain higher participation level, but the incurred extra cost is also very high, which decreases the revenue of CSP correspondingly.

\section{Conclusion}\label{Sec:Conclusion}
In this work, we have developed a two-stage game theoretic model, and analyzed each stage using backward induction. The Crowdsensing Service Provider (CSP) sets the incentive mechanism in the first stage, and the Mobile Users (MUs) decide their participation level in response to the observed reward in the second stage. Taking the social (local) network effects among the MUs into account, we have proposed the optimal incentive mechanism, i.e., discriminatory reward and uniform reward under complete information scenario, where we have obtained the closed-form expression for optimal reward. Moreover, we have derived the approximations for reward incentive under incomplete information scenario. Performance evaluations have verified that the network effects significantly improve the participation level and the revenue of CSP. Further, it has been validated that the discriminatory reward mechanism helps the CSP to achieve greater revenue gain. The future work will extend the monopoly CSP setting to a multiple CSP competition. Another direction is to take spatial/temporal concerns into reward mechanisms for the Crowdsensing such as~\cite{xiong2017network,han2016truthful}.

\section*{Acknowledgement}
This work was supported in part by National Research Foundation (NRF) Singapore, project NRF-ENIC-SERTD-SMES-NTUJTCI3C-2016, SMES, and AcRF Tier 2 Grant MOE2016-T2-2-022.

\bibliography{bibfile}

\end{document}